\theoremstyle{plain}
\newtheorem{theorem}{Theorem}[section]
\newtheorem{lemma}[theorem]{Lemma}
\theoremstyle{definition}
\newtheorem{definition}[theorem]{Definition}
\providecommand*\Index[1]{#1\index{#1}}
\providecommand*\Index[1]{#1\index{#1}}
\providecommand*\undex[1]{} 
\begin{document}
\alphafootnotes
\author[G. K. Ambler and B. W. Silverman]
{Graeme K. Ambler\footnotemark\
and Bernard W. Silverman\footnotemark}
\chapter[Perfect simulation using dominated coupling from the past]
{Perfect simulation using dominated coupling from the past with application
  to area-interaction point processes and wavelet thresholding}
\footnotetext[1]{Department of Medicine, Addenbrooke's Hospital, Hills
Road, Cambridge CB2~0QQ; graeme@ambler.me.uk}
\footnotetext[2]{Smith School of Enterprise and Environment,
Hayes House,
75 George Street,
Oxford
OX1 2BQ;   bernard.silverman@stats.ox.ac.uk}
\arabicfootnotes
\contributor{Graeme K. Ambler
  \affiliation{University of Cambridge}}
\contributor{Bernard W. Silverman
  \affiliation{University of Oxford}}

\renewcommand\thesection{\arabic{section}}
\numberwithin{equation}{section}
\renewcommand\theequation{\thesection.\arabic{equation}}
\numberwithin{figure}{section}
\renewcommand\thefigure{\thesection.\arabic{figure}}
\numberwithin{table}{section}
\renewcommand\thetable{\thesection.\arabic{table}}

\begin{abstract}
We consider perfect simulation algorithms for locally stable point processes 
based on dominated coupling from the past, and apply these methods in two different
contexts.   A new version of the algorithm is 
developed which is feasible for processes which are neither purely attractive 
nor purely repulsive.  Such processes include multiscale area-interaction 
processes, which are capable of modelling point patterns whose clustering 
structure varies across scales.    The other topic considered is nonparametric
regression
using wavelets, where we use a suitable area-interaction process on the discrete space of
indices of wavelet coefficients to model the notion that if one wavelet coefficient is non-zero
then it is more likely that neighbouring coefficients will be also.   A method based on 
perfect simulation within this model shows promising results compared to the standard methods which threshold
coefficients independently.    
\end{abstract}

\subparagraph{Keywords}coupling from the past (CFTP), dominated CFTP, exact simulation, local stability, Markov chain Monte Carlo, perfect simulation, Papangelou conditional intensity, spatial birth-and-death process
\subparagraph{AMS subject classification (MSC2010)}62M30, 60G55, 60K35

\section{Introduction}
\index{Markov, A. A.!Markov chain Monte Carlo
(MCMC)}\index{simulation!exact/perfect simulation|(}Markov chain Monte Carlo (MCMC) is now one of the standard approaches of 
computational Bayesian inference\index{Bayes, T.!Bayesian inference}.  
A standard issue when using MCMC is the need to ensure that the Markov chain we are using for simulation has reached equilibrium.
For certain classes of problem, this problem was solved 
by the introduction of \index{coupling!coupling from the past (CFTP)}coupling from the past (CFTP)
\citep{pro-wil:exa,pro-wil:unk-mar}.  More recently, methods based on 
CFTP have been developed for perfect simulation of spatial point
process\index{point process!spatial point process} models 
(see for example \cite{ken:boo,ken:are,hag-lie-mol:exa-spa,ken-mol:per}).  

Exact CFTP methods are therefore attractive, as one does not need to 
check convergence rigorously or worry about burn-in, or use complicated methods to find 
appropriate standard errors for Monte Carlo estimates based on correlated samples.  
Independent and identically distributed samples are now available, so estimation 
reduces to the simplest case.  Unfortunately, this simplicity comes at a price.  
These methods are notorious for taking a long time to return just one exact 
sample and are often difficult to code, leading many to give up and return to 
nonexact methods.  In response to these issues, in the first part of this paper we present a 
dominated CFTP algorithm\index{coupling!dominated CFTP algorithm} for the
simulation of locally stable point processes\index{point process!locally stable point process} 
which potentially requires far fewer evaluations per iteration than the existing 
method in the literature \citep{ken-mol:per}.

The paper then goes on to discuss applications of this CFTP algorithm, in two different contexts, 
the modelling of point patterns and nonparametric regression\index{regression!nonparametric regression} by wavelet thresholding\index{wavelet!thresholding}.  
In particular it will be seen that these two problem areas are much more closely related than might be imagined, 
because of the way that the non-zero coefficients in a wavelet expansion may be modelled as an appropriate point process.

The structure of the paper is as follows.  In Section~\ref{sec:simulation} 
we discuss perfect simulation, beginning with ordinary 
coupling from the past (CFTP) and moving on to dominated CFTP for spatial point 
processes.  We then introduce and justify our perfect simulation algorithm.  In 
Section \ref{aiprocess} we first review the standard \index{point process!area-interaction process}area-interaction process.  We 
then introduce our \index{point process!multiscale area-interaction process}multiscale process, describe how to use our new perfect 
simulation algorithm\index{simulation!exact/perfect simulation} to simulate from it, and discuss a method for inferring 
the parameter values from data, and present an application to the \index{data!Redwood seedlings}Redwood seedlings data.
In Section~\ref{sec:wavelets} we turn attention to the wavelet regression
problem\index{wavelet!regression}.
Bayesian approaches are reviewed, and a model introduced which incorporates an 
area-interaction process on the discrete space of indices of wavelet coefficients. 
In Section~\ref{sec:wavperfsim} the application of our perfect simulation algorithm in
this context is developed.   The need appropriately to modify the approach  to increase
its computational feasibility is addressed, and a simulation study investigating 
its performance on standard test examples is carried out.  Sections~\ref{aiprocess} and \ref{sec:wavperfsim} both
conclude with some suggestions for future work.

\section{Perfect simulation}\label{sec:simulation}
\subsection{Coupling from the past}\label{cftp}


In this section, we offer a brief intuitive introduction to the 
principle behind CFTP. 
For more formal descriptions and details, see, for example,  
\citet{pro-wil:exa}\index{Propp, J. G.}\index{Wilson, D. B.}, \citet[Chapter~32]{mac:inf} and \citet{con:per}.

Suppose we wanted to
sample from the stationary distribution of an irreducible aperiodic 
Markov chain $\{Z_t\}$
on some (finite) state space $X$ with states 1, \ldots, $n$.  Intuitively,
if it were possible to go back an infinite amount in time and start the 
chain
running, the chain
would be in its stationary distribution when one
returned to the present (i.e. $Z_0\sim\pi$, where $\pi$ is the stationary
distribution of the chain).

Now, suppose we were to set not one, but $n$ chains
$\{Z^{(1)}_t\}$, \ldots, $\{Z^{(n)}_t\}$ running at a fixed time $-M$ in the
past, where $Z^{(i)}_{-M}=i$ for each chain $\{Z^{(i)}_t\}$.  Now let all
the chains be coupled so that if $Z^{(i)}_{s}=Z^{(j)}_{s}$ at any time $s$
then $Z^{(i)}_{t}=Z^{(j)}_{t}$ $\forall t\geq s$.  Then if
all the chains ended up in the same state $j$ at time zero
(i.e. $Z^{(i)}_0=j$ $\forall i\in X$), we would know that
whichever state the chain passing from time minus infinity to zero was in
at time $-M$, the chain would end up in state $j$ at time zero.  Thus 
the state at time zero
is a sample from the stationary distribution provided $M$ is large 
enough for coalescence to have been achieved for the realisations being 
considered. 





When performing CFTP, a useful property of the coupling chosen is that it be
\emph{stochastically monotone}\index{stochastically monotone} as in the following definition.

\begin{definition}\label{sto-mono}
Let $\{Z^{(i)}_t\}$ and $\{Z^{(j)}_t\}$ be two Markov chains obeying the 
same transition kernel.  Then a coupling of these Markov chains is 
stochastically monotone with respect to a partial ordering $\preceq$ if 
whenever $Z^{(i)}_t\preceq Z^{(j)}_t$, then $Z^{(i)}_{t+k}\preceq Z^{(j)}_{t+k}$ 
for all positive $k$.
\end{definition}

Whenever the coupling used is stochastically monotone and there are maximal 
and minimal elements with respect to $\preceq$ then we need only simulate 
chains which start in the top and bottom states, since chains starting in all 
other states are sandwiched by these two.  This is an important ingredient 
of the dominated coupling from the past\index{coupling!dominated coupling from the past|(} algorithm introduced in the next 
section.

Although 
attempts have been made to generalise CFTP to continuous state 
spaces (notably \cite{mur-gre:con} and \cite{gre-mur:bay}, 
as well as \cite{ken-mol:per}, discussed in Section~\ref{domcftp}), 
there is still much work to be done before exact sampling becomes 
universally, or even generally applicable.  For example, there are no truly 
general methods for processes in high, or even moderate, dimensions.

\subsection{Dominated coupling from the past}
\label{domcftp}

Dominated coupling from the past was introduced as an extension of coupling 
from the past which allowed the simulation of the area-interact\-ion process 
\citep{ken:are}\index{point process!area-interaction process|(}, though it was soon extended to other types of point processes 
and more general spaces \citep{ken-mol:per}\index{Kendall,
W. S.|(}\index{Moller, J.@M{\o}ller, J.|(}.  We give the formulation for 
locally stable point processes\index{point process!locally stable point process}.


Let $x$ be a spatial point 
pattern 
in some bounded subset
$S\subset\mathbb{R}^n$, and $u$ a single point $u\in S$.  Suppose that $x$
is a realisation of a spatial point process\index{point process!spatial point process} $X$
with density $f$ with respect to the unit rate 
\index{Poisson, S. D.!Poisson process|(}Poisson process.
The
\index{Papangelou, F.!Papangelou conditional intensity}\emph{Papangelou conditional intensity} $\lambda_f$ is defined by
\[
\lambda_f(u; x)  = \frac{f(x\cup\{u\})}{f(x)};
\]
see, for example, \cite{pap:con}\index{Papangelou, F.} and
\cite{bad-etal:res}\index{Baddeley, A. J.}\index{Turner, R.}\index{Hazelton, M.}.  If the process $X$ is locally stable, then there exists a constant $\lambda$ such that $\lambda_f(u;x) \leq \lambda$ for all finite point configurations $x\subset S$ and all points $u\in S \setminus x$.


The algorithm given in \cite{ken-mol:per} is then as follows.

\begin{enumerate}
\item Obtain a sample of the Poisson process with rate $\lambda$.
\item Evolve a \index{Markov, A. A.!Markov process}Markov process $D(T)$ \emph{backwards} until some fixed time $-T$, 
using a \Index{birth-and-death process} with death rate equal to 1 and birth rate equal 
to $\lambda$.  The configuration generated in step 1 is used as the initial state.
\item Mark all of the points in the process with U[0,1] marks.  We refer to the 
mark of point $x$ as $P(x)$.
\item \label{init} Recursively define upper and lower processes, $U$ and $L$ as 
follows.  The initial configurations at time $-T$ for the processes are 
\begin{eqnarray*}
	U_{-T}(-T) & = & \left\{x\; : x\in D(-T)\right\};\\
	L_{-T}(-T) & = & \left\{\boldsymbol{0}\right\}.
\end{eqnarray*}
\item \label{evolve} Evolve the processes \emph{forwards} in time to $t=0$ in the 
following way.

Suppose that the processes have been generated up a given time, $u$, and suppose that 
the next birth or death to occur after that time happens at time $t_i$.  If 
a \textbf{birth} happens next then we accept the birth of 
the point $x$ in $U_{-T}$ or $L_{-T}$ if the point's mark, $P(x)$, is less than

\begin{equation} \label{birthrate}
\begin{split}
&\min\left\{\frac{\lambda_f(x;X)}{\lambda}:
		L_{-T}(t_i)\subseteq X\subseteq U_{-T}(t_i)\right\}\ \text{or}\\
&\max\left\{\frac{\lambda_f(x;X)}{\lambda}:
		L_{-T}(t_i)\subseteq X\subseteq U_{-T}(t_i)\right\}
\end{split}
\end{equation}
respectively, where $x$ is the point to be born.

If, however, a \textbf{death} happens next then if the event is present 
in either of our processes we remove the dying event, setting
$U_{-T}(t_i) = U_{-T}(u)\setminus \{x\}$ and 
$L_{-T}(t_i) = L_{-T}(u)\setminus \{x\}$.
\item Define $U_{-T}(u+\varepsilon) = U_{-T}(u)$ and 
$L_{-T}(u+\varepsilon) = L_{-T}(u)$ for $u<u+\varepsilon<t_i$.
\item If $U_{-T}$ and $L_{-T}$ are identical at time zero 
(i.e. if $U_{-T}(0) = L_{-T}(0)$), then we have the required sample 
from the \index{point process!area-interaction process|)}area-interaction process with rate parameter $\lambda$ and 
attraction parameter $\gamma$.  If not, go to step 2 and repeat, extending 
the underlying \index{Poisson, S. D.!Poisson process|)}Poisson process back to $-(T+S)$ and generating 
additional $U[0,1]$ marks (keeping the ones already generated).
\end{enumerate}

This algorithm involves calculation of $\lambda(u;X)$ for each 
configuration that is both a subset of $U(T)$ and a superset of $L(T)$.  
Since calculation of $\lambda(u;X)$ is typically expensive, this calculation 
may be very costly.  The method proposed in Section~\ref{perfect_alg} uses 
an alternative version of step~\ref{evolve} which requires us only to 
calculate $\lambda(u;X)$ for upper and lower processes.

The more general form given in \cite{ken-mol:per}\index{Kendall, W. S.!Kendall--M{\o}ller algorithm|(} may be obtained from the 
above algorithm by replacing the evolving Poisson process $D(T)$ with a 
general dominating process on a partially ordered space $(\Omega,\preceq)$ 
with a unique minimal element $\boldsymbol{0}$.  The partial ordering in the 
above algorithm is that induced by the subset relation $\subseteq$.  
Step~\ref{evolve} is replaced by any step which preserves the crucial \emph{funnelling property}
\begin{equation}
\label{funnel}	L_{-T}(u)\preceq L_{-(T+S)}(u)\preceq U_{-(T+S)}(u)\preceq U_{-T}(u)
\end{equation}
for all $u<0$ and $T$, $S>0$
and the \emph{sandwiching relations}
\begin{eqnarray}
\label{sand1} &	L_{-T}(u)\preceq X_{-T}(u)\preceq U_{-T}(u)\preceq D(u) &
\text{and}\\
\label{sand2} &	L_{-T}(t)=U_{-T}(t)\;\text{ if }\;L_{-T}(s)=U_{-T}(s) &
\end{eqnarray}
for $s\leq t\leq0$.
In equation (\ref{sand1}), $X_{-T}(u)$ is the 
Markov chain or process from whose stationary distribution we wish to
sample.\index{coupling!dominated coupling from the past|)}

\subsection{A perfect simulation
algorithm}\label{perfect_alg}\index{algorithm!perfect simulation algorithm|(}

Suppose that we wish to sample from a locally stable point process\index{point
process!locally stable point process|(} with density
\begin{equation}\label{gen-proc}
	p(X) = \alpha\prod_{i=1}^m f_i(X),
\end{equation}
where $\alpha\in(0,\infty)$ and $f_i:\mathfrak{R}^f\to\mathbb{R}$ are 
positive-valued functions which are monotonic with respect to the 
partial ordering $\preceq$ induced by the subset relation\footnote{That is, 
configurations $x$ and $y$ satisfy $x\preceq y$ if $x\subseteq y$.} and 
have uniformly bounded Papangelou conditional intensity\index{Papangelou, F.!Papangelou conditional intensity}:
\begin{equation}
\label{eq:papacond}
	\lambda_{f_i}(u;\mathbf{x})
		=\frac{f_i(\mathbf{x}\cup\{u\})}{f_i(\mathbf{x})}
		\leq K.
\end{equation}

When the conditional intensity (\ref{eq:papacond}) can be
expressed in this way, as the product of monotonic
interactions, then we shall demonstrate that the crucial step of the
Kendall--M{\o}ller algorithm may be re-written
in a form which is computationally much more
efficient, essentially by dealing with each
factor separately. 

Clearly
\begin{equation}\label{gen-dom}
	\lambda_p(u;\mathbf{x}) \leq \lambda = 
		\prod_{i=1}^m\max_{X,\{x\}}\lambda_{f_i}(x;X)
\end{equation}
for all $u$ and $\mathbf{x}$, and $\lambda$ is finite.  Thus we may use 
the algorithm in Section~\ref{domcftp} to simulate from this process 
using a Poisson process with rate $\lambda$ as the dominating process.

However, as previously mentioned, calculation of $\lambda_p(u;\mathbf{x})$ 
is typically expensive\index{computational complexity|(}, increasing at least linearly in $n(\mathbf{x})$.  
Thus to calculate the expressions in (\ref{birthrate}), we must in 
general perform $2^{n(U_{-T}(t_i))-n(L_{-T}(t_i))}$ of these 
calculations, making the algorithm non-polynomial.  In practice it is 
clearly not feasible to use this algorithm in all but the most trivial of 
cases, so we must look for some way to reduce the computational burden in 
step~\ref{evolve} of the algorithm.   

This can be done by replacing step~\ref{evolve} with the following alternative\index{computational complexity|)}.

\begin{enumerate}
\item[\ref{evolve}$\mbox{}^\prime$] Evolve the processes \emph{forwards} in 
time to $t=0$ in the following way.

Suppose that the processes have been generated up a given time, $u$, and suppose 
that the next birth or death to occur after that time happens at time $t_i$.  If 
a \textbf{birth} happens next then we accept the birth of 
the point $x$ in $U_{-T}$ or $L_{-T}$ if the point's mark, $P(x)$, is less than
\begin{eqnarray}
\label{gen-up}	& \prod_{i=1}^m \left[\max\left\{ \lambda_{f_i}(u;U(T)) , 
			\lambda_{f_i}(u;L(T))\right\}
		\left/ \lambda\right.\right] 
								&\text{ or}\\
\label{gen-low} & \prod_{i=1}^m \left[\min\left\{ \lambda_{f_i}(u;U(T)) , 
			\lambda_{f_i}(u;L(T))\right\}
		\left/ \lambda\right.\right] &
\end{eqnarray}
respectively, where $x$ is the point to be born.

If, however, a \textbf{death} happens next then if the event is present 
in either of our processes we remove the dying event, setting
$U_{-T}(t_i) = U_{-T}(u)\setminus \{x\}$ and 
$L_{-T}(t_i) = L_{-T}(u)\setminus \{x\}$.
\end{enumerate}

\begin{lemma}\label{cftp_lemma}
Step\/ $\ref{evolve}'$ obeys properties\/ $(\ref{funnel})$, $(\ref{sand1})$ and $(\ref{sand2})$,
and is thus a valid dominated coupling-from-the-past 
algorithm.
\end{lemma}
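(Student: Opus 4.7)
The plan is to verify each of (\ref{funnel}), (\ref{sand1}) and (\ref{sand2}) by induction over the successive birth and death events between time $-T$ and $0$, driving everything from a single algebraic observation about the factorised density (\ref{gen-proc}). Because each $f_i$ is monotone with respect to subset inclusion, each Papangelou factor $\lambda_{f_i}(u;\cdot)$ is itself monotone in its configuration argument (some attractive, some repulsive). Hence for any $X$ with $L\subseteq X\subseteq U$,
\[
\min\{\lambda_{f_i}(u;U),\lambda_{f_i}(u;L)\}\;\leq\;\lambda_{f_i}(u;X)\;\leq\;\max\{\lambda_{f_i}(u;U),\lambda_{f_i}(u;L)\},
\]
and taking the product over $i$ and dividing by $\lambda$ shows that the new thresholds (\ref{gen-low}) and (\ref{gen-up}) bracket the true acceptance probability $\lambda_p(u;X)/\lambda$ of every intermediate configuration.

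From this bracketing the sandwiching relation (\ref{sand1}) follows by induction. The initial inclusions $\emptyset = L_{-T}(-T)\subseteq X_{-T}(-T)\subseteq D(-T)=U_{-T}(-T)$ hold trivially; a death event preserves the chain because the same point is removed from every process in which it is present. At a birth of $x$ with common uniform mark $P(x)$, the bracketing implies that if $P(x)$ passes the lower threshold it also passes the true intermediate acceptance probability and the upper threshold, while if $P(x)$ fails the upper threshold it already fails both of the others. Either way the inclusions survive. Property (\ref{sand2}) is immediate from step $\ref{evolve}'$: the moment $L_{-T}(s)=U_{-T}(s)$, the min and max in (\ref{gen-low})--(\ref{gen-up}) coincide for every subsequent proposal, so the coupled evolution keeps $L_{-T}(t)=U_{-T}(t)$ for all $t\geq s$.

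For the funnelling property (\ref{funnel}), assume inductively that $L_{-T}(u)\subseteq L_{-(T+S)}(u)\subseteq U_{-(T+S)}(u)\subseteq U_{-T}(u)$ at time $u$. The nontrivial check is at a birth of $x$, where one must show that the $-T$ upper threshold dominates the $-(T+S)$ upper threshold and the $-T$ lower threshold is dominated by the $-(T+S)$ lower threshold. I would dispatch this factor by factor: if $\lambda_{f_i}(x;\cdot)$ is increasing then the max in (\ref{gen-up}) is attained at $U$, and $U_{-(T+S)}\subseteq U_{-T}$ supplies the required inequality; if decreasing, the max is attained at $L$, and $L_{-T}\subseteq L_{-(T+S)}$ does the job. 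The matching statement for the minima in (\ref{gen-low}) follows from the same dichotomy. Multiplying the $m$ factorwise inequalities, all factors being positive, yields the two required threshold comparisons, after which the shared mark $P(x)$ transports the inclusions through the birth.

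The main obstacle is precisely this funnel step. The factorwise use of $\max$ and $\min$ in (\ref{gen-up})--(\ref{gen-low}) is a strict weakening of the joint extremum that appears in the Kendall--M\"oller algorithm, so monotonicity of a single joint Papangelou intensity cannot be invoked directly; one must split according to the direction of monotonicity of each $\lambda_{f_i}$ and verify that after multiplication these looser bounds still respect the correct ordering between the two starting times $-T$ and $-(T+S)$.
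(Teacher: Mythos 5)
Your argument is correct and follows the same route as the paper's (much terser) proof: the factorwise bracketing $(\ref{gen-low})\leq\lambda_p(u;X)/\lambda\leq(\ref{gen-up})\leq 1$ yields (\ref{sand1}), (\ref{sand2}) is immediate once the thresholds coincide, and (\ref{funnel}) rests on the monotonicity of the factors. Your expansion of the funnel step---splitting according to whether each $\lambda_{f_i}(x;\cdot)$ is increasing or decreasing, so that the relevant extremum is attained at $U$ or at $L$ respectively, and only then multiplying---is precisely the content the paper compresses into ``follows from the monotonicity of the $f_i$''.
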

\begin{proof} 
Property~(\ref{sand1}) follows by noting 
that
\[
(\ref{gen-low})\leq\lambda_p(u;X)\leq(\ref{gen-up})\leq1.
\]
Property~(\ref{sand2}) is trivial.  Property~(\ref{funnel}) follows 
from the monotonicity of the $f_i$.
\end{proof}

\begin{theorem}
Suppose that we wish to simulate from a locally stable point
process\index{point process!locally stable point process|)} whose 
density $p(X)$ with respect to the unit-rate \index{Poisson, S. D.!Poisson process}Poisson process is representable 
in form\/ $(\ref{gen-proc})$.  Then by replacing Step\/ $\ref{evolve}$ by 
Step\/ $\ref{evolve}'$ it is possible to bound the necessary number of 
calculations of $\lambda_p(u;X)$ per iteration in the dominated 
coupling-from-the-past algorithm independently of $n(X)$.
\end{theorem}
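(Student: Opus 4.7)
The validity of the proposed Step $\ref{evolve}'$ as a dominated CFTP scheme is already handled by Lemma~\ref{cftp_lemma}, so the remaining task is essentially a counting argument comparing the work per event under Step $\ref{evolve}$ versus Step $\ref{evolve}'$. My plan is to fix a single transition time $t_i$ in the dominating process and tally how many evaluations of $\lambda_p$, or equivalently of its constituent factors $\lambda_{f_i}$, each scheme requires.

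The first step is to record the factorisation that follows directly from (\ref{gen-proc}):
\[
\lambda_p(u;X) \;=\; \frac{\alpha\prod_i f_i(X\cup\{u\})}{\alpha\prod_i f_i(X)}
\;=\;\prod_{i=1}^m \lambda_{f_i}(u;X),
\]
so that one evaluation of $\lambda_p$ at a given configuration is tantamount to $m$ single-factor evaluations. Next, I would observe from (\ref{birthrate}) that the original Step~$\ref{evolve}$ demands $\lambda_p(x;X)$ at every $X$ with $L_{-T}(t_i)\subseteq X\subseteq U_{-T}(t_i)$, yielding $2^{n(U_{-T}(t_i))-n(L_{-T}(t_i))}$ evaluations per birth, which is unbounded in $n(X)$.

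The third step is to inspect (\ref{gen-up}) and (\ref{gen-low}): both expressions reference $\lambda_{f_i}$ only at the two configurations $U_{-T}(t_i)$ and $L_{-T}(t_i)$. Accepting or rejecting a proposed birth therefore requires precisely $2m$ single-factor evaluations, equivalent to two evaluations of $\lambda_p$ (one at $U_{-T}(t_i)$, one at $L_{-T}(t_i)$), independently of the gap $n(U_{-T}(t_i))-n(L_{-T}(t_i))$. Death events require no intensity evaluation at all since the update just excises the point from whichever of $U_{-T}$ and $L_{-T}$ contains it. Putting these together gives the uniform bound of two $\lambda_p$-evaluations per event, with no dependence on $n(X)$.

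There is no real obstacle here beyond making sure the accounting is correctly transcribed: the only subtlety is to note that the max and min in (\ref{gen-up}) and (\ref{gen-low}) are taken over the index set $\{U_{-T}(t_i),L_{-T}(t_i)\}$ rather than over all intermediate configurations, which is what collapses the exponential cost of Step~$\ref{evolve}$ to the constant cost of Step~$\ref{evolve}'$. I would conclude the proof by combining this per-event bound with Lemma~\ref{cftp_lemma} to state that the replacement is both valid and uniformly cheap.
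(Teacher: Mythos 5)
Your proposal is correct and follows essentially the same route as the paper: the paper's proof simply observes that Step~$\ref{evolve}'$ involves only a constant number of calculations and then invokes Lemma~\ref{cftp_lemma} together with Theorem~2.1 of \cite{ken-mol:per}, which is exactly the validity-plus-counting argument you spell out. Your explicit tally (evaluations only at $U_{-T}(t_i)$ and $L_{-T}(t_i)$, hence $2m$ single-factor evaluations per birth and none per death) just makes the paper's ``clearly'' precise.
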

\begin{proof} Step~\ref{evolve}$\mbox{}^\prime$ clearly involves only a 
constant number of calculations, so by
Lemma~\ref{cftp_lemma} above and Theorem~2.1 of \cite{ken-mol:per}, the 
result holds.
\end{proof}

In the case where it is possible to write $p(X)$ in form~(\ref{gen-proc}) 
with $m=1$, Step~\ref{evolve}$\mbox{}^\prime$ is identical to Step~\ref{evolve}.  This is 
the case for models which are either purely attractive or purely
repulsive\index{point process!clustering|(}, 
such as the standard \index{point process!area-interaction process|(}area-interaction process discussed in 
Section~\ref{std_aip}.  It is not the case for the \index{point process!multiscale area-interaction process}multiscale process
discussed in Section~\ref{multi}, or the model for \index{wavelet!thresholding}wavelet coefficients discussed
in Section~\ref{model}.  

The proof of 
Theorem~2.1 in \cite{ken-mol:per} does not require that the initial 
configuration of $L_{-T}$ be the minimal element $\mathbf{0}$, only that 
it be constructed in such a way that properties~(\ref{funnel}), (\ref{sand1}) and (\ref{sand2}) are satisfied.  
Thus we may refine our method further by 
modifying step~\ref{init} so that the initial configuration of $L_{-T}$ is 
given by
\begin{equation}\label{gen-thin}
	L_{-T}(-T) = \left\{x\in D(-T):P(x)\leq\prod_{i=1}^m\left[\min_{X,\{x\}}
		\lambda_{f_i}(x;X)\left/\lambda\right.\right]\right\},
\end{equation}
which clearly satisfies the necessary requirements\index{algorithm!perfect
simulation algorithm|)}\index{Kendall, W. S.!Kendall--M{\o}ller
algorithm|)}\index{Kendall, W. S.|)}\index{Moller, J.@M{\o}ller, J.|)}.

\section{Area-interaction processes}\label{aiprocess}

\subsection{Standard area-interaction process}\label{std_aip}

There are several classes of model for 
stochastic point processes, for example simple \index{Poisson, S. D.!Poisson
process}Poisson processes, cluster 
processes such as \index{Cox, D. R.!Cox process}Cox processes, and processes defined as the stationary 
distribution of \index{Markov, A. A.!Markov point process}Markov point processes, such as Strauss processes 
\citep{str:a-mod} and area-interaction processes \citep{bad-lie:are-int}.  
The area-interaction point process is capable of producing both moderately clustered and moderately ordered
patterns depending on the value of its clustering parameter.  It was
introduced primarily to fill a gap left by the \index{Strauss, D. J.!Strauss
point process}Strauss point process
\citep{str:a-mod}, which can produce only ordered point patterns
\citep{kel-rip:a-not}.

The general definition of the area-interaction process depends on a specification of
the \emph{neighbourhood} of any point in the space $\chi$ on which the process
is defined.  Given any $x\in\chi$ we denote by $B(x)$ the neighbourhood of the
point $x$.  Given a set $X\subseteq\chi$, the neighbourhood $U(X)$ of $X$ is
defined as $\bigcup_{x\in X}B(x)$.  The general area-interaction process
is then defined by \cite{bad-lie:are-int}\index{Baddeley, A. J.|(}\index{Lieshout, M. N. M. van|(} as follows.

Let $\chi$ be some locally compact complete metric space and
$\mathfrak{R}^f$ be the space of all possible configurations of points in
$\chi$.  Suppose that $m$ is a finite Borel regular measure on $\chi$ and
$B:\chi\to \mathcal{K}$ be a \Index{myopically continuous function} 
\citep{mat:ran}, where $\mathcal{K}$ is the class of all compact subsets 
of $\chi$.  Then the probability density of the general area-interaction 
process is given by
\begin{equation}\label{gaip}
	p(X) = \alpha\lambda^{N(X)}\gamma^{-m\{U(X)\}}
\end{equation}
with respect to the unit rate Poisson process\index{Poisson, S. D.!Poisson process}, where $N(X)$ is the number of
points in configuration $X=\{x_1,\ldots,x_{N(X)}\}\in\mathfrak{R}^f$, $\alpha$
is a normalising constant and $U(X) = \bigcup_{i=1}^{N(X)}B(x_i)$ as above.

In the spatial point-process\index{point process!spatial point process} case, for some fixed compact set $G$ in $\mathbb{R}^d$, 
the neighbourhood $B(x)$ of each point $x$ is defined to be $x \oplus G$.
Here $\oplus$ is the \index{Minkowski, H.!Minkowski addition}Minkowski addition operator, defined by $A \oplus B = \{ a+b :  a \in A, b \in B \}$
for sets $A$ and $B$.   So the resulting 
area-interaction process has density 
\begin{equation}\label{aip}
	p(X) = \alpha\lambda^{N(X)}\gamma^{-m(X\oplus G)}
\end{equation}
with respect to the unit-rate Poisson process, where $\alpha$ is a 
normalising constant, $\lambda>0$ is the \emph{rate} parameter, $N(X)$ is 
the number of points in the configuration $X$, $\gamma>0$ is the 
\emph{clustering} parameter.  Here $0<\gamma<1$ is the \emph{repulsive} case, while $\gamma>1$ is the 
\emph{attractive} case.  The case $\gamma=1$ reduces to the homogeneous 
Poisson process with rate $\lambda$.  Figure \ref{fig:AIP} gives an example of the construction when $G$ is a 
disc.

\begin{figure}
	\begin{center}
		\resizebox{0.6\textwidth}{!}{\hspace{-6em}\includegraphics{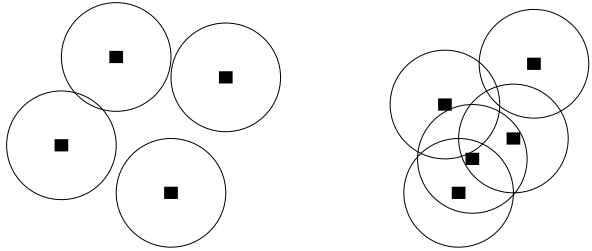}}
	\end{center}
		\caption[An example of some events together with circular `grains' G.]{An example of some events together with circular `grains' $G$.  The events in the above diagram would be the actual members of the process.  The circles around them are to show what the set $X\oplus G$ would look like.  If $\gamma$ were large, the point configuration on the right would be favoured, whereas if $\gamma$ were small, the configuration on the the left would be favoured.\label{fig:AIP}}
\end{figure}

\subsection{A multiscale area-interaction process}\label{multi}\index{point process!multiscale area-interaction process|(}

The area-interaction process is a flexible model yielding a good   
range of models, from regular through total spatial randomness to 
clustered.  Unfortunately it does not allow for models whose behaviour 
changes at different resolutions, for example repulsion at small 
distances and attraction at large distances.  Some examples which display
this sort of behaviour are the distribution of trees 
on a hillside, or the distribution of \Index{zebra} in a patch of savannah.
A physical example of large scale attraction and small scale repulsion is 
the interaction between the \Index{strong nuclear force} and the
\Index{electro-magnetic force} between two oppositely charged particles.  The physical 
laws governing this behaviour are different from those governing the 
behaviour of the area-interaction class of models, though they may be 
sufficiently similar so as to provide a useful approximation.

We propose the following model to capture these types of behaviour.

\begin{definition}
The \emph{multiscale area-interaction process} has 
density
\begin{equation}\label{att-rep}
	p(X) = \alpha\lambda^{N(X)}\gamma_1^{-m(X\oplus G_1)}
					\gamma_2^{-m(X\oplus G_2)},
\end{equation}
where $\alpha$, $\lambda$ and $N(X)$, are as in equation \eqref{aip}; 
$\gamma_1\in[1,\infty)$ and $\gamma_2\in(0,1]$; and
$G_1$ and $G_2$ are balls of radius $r_1$ and $r_2$ respectively.
\end{definition}

The process is clearly Markov of range $\max\{r_1,r_2\}$.  If 
$G_1\supset G_2$, we will have small-scale repulsion and large-scale 
attraction.  If $G_1\subset G_2$, we will have small-scale attraction and 
large-scale repulsion.

\begin{theorem}\label{thm:att-rep}
The density \eqref{att-rep} is both measurable and integrable.
\end{theorem}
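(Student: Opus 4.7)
The plan is to verify measurability and integrability separately, treating the density as a product of three factors. Measurability reduces to showing that each of $N(X)$, $m(X\oplus G_1)$ and $m(X\oplus G_2)$ is a measurable function of the configuration $X\in\mathfrak{R}^f$. The counting functional $N$ is measurable by the definition of the Poisson $\sigma$-algebra on configurations. For the Minkowski-sum terms, I would write $m(X\oplus G_i)=\int \mathbf{1}\{y\in X\oplus G_i\}\,dm(y)$ and note that $\mathbf{1}\{y\in X\oplus G_i\}=\mathbf{1}\{X\cap(y\oplus\check G_i)\neq\emptyset\}$ depends measurably on $X$ since hitting functionals of compact sets generate the standard $\sigma$-algebra on $\mathfrak{R}^f$; Fubini then gives measurability in $X$.

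For integrability the key is to bound the density by a constant times a term of the form $c^{N(X)}$, which is integrable against the unit-rate Poisson process on the bounded region $S$. First, because $\gamma_1\in[1,\infty)$ we have $\gamma_1^{-m(X\oplus G_1)}\leq 1$. Next, using subadditivity of Lebesgue measure under unions,
\begin{equation*}
m(X\oplus G_2)=m\Bigl(\bigcup_{x\in X}(x+G_2)\Bigr)\leq N(X)\,m(G_2),
\end{equation*}
and since $\gamma_2\in(0,1]$ the map $t\mapsto\gamma_2^{-t}$ is nondecreasing, so
\begin{equation*}
\gamma_2^{-m(X\oplus G_2)}\leq\gamma_2^{-N(X)\,m(G_2)}.
\end{equation*}
Combining these bounds gives $p(X)\leq\alpha\,\bigl(\lambda\gamma_2^{-m(G_2)}\bigr)^{N(X)}$.

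To conclude, I would integrate against the unit-rate Poisson process $\pi$ on $S$ using the standard representation
\begin{equation*}
\int h(X)\,d\pi(X)=e^{-m(S)}\sum_{n=0}^{\infty}\frac{1}{n!}\int_{S^n}h(\{x_1,\dots,x_n\})\,dx_1\cdots dx_n.
\end{equation*}
With $c=\lambda\gamma_2^{-m(G_2)}<\infty$ this yields $\int c^{N(X)}\,d\pi(X)=e^{m(S)(c-1)}<\infty$, establishing integrability.

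I do not expect a serious obstacle here; the only mildly delicate point is verifying that $X\mapsto m(X\oplus G_i)$ is measurable on configuration space, and even there the hitting-functional trick together with Fubini does the job cleanly. The subadditivity bound $m(X\oplus G_2)\leq N(X)\,m(G_2)$ is what allows the $\gamma_2<1$ factor (which would otherwise blow up) to be absorbed into a geometric-type term in $N(X)$, and this is the one place where the hypothesis $\gamma_2\leq 1$ is used essentially.
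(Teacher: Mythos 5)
Your proof is correct and is essentially the argument the paper relies on: the paper gives no proof itself, deferring to a ``straightforward extension'' of Baddeley and van Lieshout's proof for the standard area-interaction process, and that extension is exactly your combination of the hitting-functional/Fubini argument for measurability with the local-stability bound $p(X)\le\alpha\bigl(\lambda\gamma_2^{-m(G_2)}\bigr)^{N(X)}$ obtained from $\gamma_1\ge 1$ and subadditivity of $m$ for the $\gamma_2$ factor. No gaps.
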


This is a straightforward extension of the proof of \cite{bad-lie:are-int}
for the standard area-interaction process\index{Baddeley, A. J.|)}\index{Lieshout, M. N. M. van|)};
for details, see the Appendix of \cite{amb-sil:per}.

\subsection{Perfect simulation of the multiscale process}\label{perfectmulti}

Perfect simulation of the multiscale process (\ref{att-rep}) is possible 
using the method introduced in Section~\ref{perfect_alg}.  Since 
(\ref{att-rep}) is already written as a product of three monotonic 
functions with uniformly bounded Papangelou conditional
intensities\index{Papangelou, F.!Papangelou conditional intensity}, we need 
only substitute into equations (\ref{gen-dom}--\ref{gen-thin}) as follows.

Substituting 
into equation (\ref{gen-dom}), we find that the rate of a suitable 
dominating process is
\[
	\lambda\gamma_2^{-m(G_2)}.
\]
The initial configurations of the upper and lower processes $U$ and $L$ are 
then found by simulating this process, thinning with a probability of 
\[
	\gamma_1^{-m(G_1)}\gamma_2^{m(G_2)}
\]
for $L$.

As $U$ and $L$ evolve towards time $0$, we accept points $x$ in $U$ with 
probability
\begin{equation}\label{multiUaccept}
		\gamma_1^{-m((x\oplus G_1)\setminus U_{-T}(u)\oplus G_1)}
		\gamma_2^{m(G_2)-m((x\oplus G_2)\setminus L_{-T}(u)\oplus G_2)}
\end{equation}
and accept events in $L$ whenever 
\begin{equation}\label{multiLaccept}
		P(x)\leq\gamma_1^{-m((x\oplus G_1)\setminus L_{-T}(u)\oplus G_1)}
		\gamma_2^{m(G_2)-m((x\oplus G_2)\setminus U_{-T}(u)\oplus G_2)}.
\end{equation}

Figure \ref{fig:AIP2} 
gives examples of the construction $(x\oplus G)\setminus Y_{-T}(u)\oplus G$\index{point process!multiscale area-interaction process|)}.
\begin{figure}[t]
	\begin{center}
		\resizebox{0.9\textwidth}{!}{\includegraphics{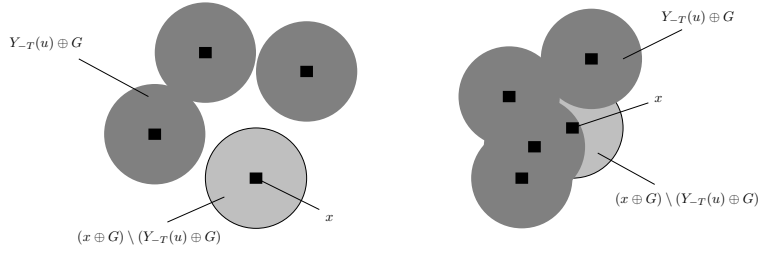}}
	\end{center}
		\caption[Another look at Figure \ref{fig:AIP}.]{Another look at Figure \ref{fig:AIP} with some shading added to show the process of simulation.  Dark shading shows $Y_{-T}(u)\oplus G$ where $Y_{-T}(u)$ is the state of either $U$ or $L$ immediately before we add the new event and $G$ could be either $G_1$ or $G_2$.  Light shading shows the amount added if we accept the new event.  In the configuration on the left, $x\oplus G=(x\oplus G)\setminus(Y_{-T}(u)\oplus G)$, so that the attractive term in (\ref{multiUaccept}) or (\ref{multiLaccept}) will be very small, whereas the repulsive term will be large.  In the configuration on the right we are adding very little area to $(Y_{-T}(u)\oplus G)$ by adding the event, so the attractive term will be larger and the repulsive term will be smaller.\label{fig:AIP2}}
\end{figure}

\subsection{Redwood seedlings data}\label{appl}\index{data!Redwood seedlings|(}

\begin{figure}
 \begin{center}
  \resizebox{.95\textwidth}{!}{\hspace{-5em}\resizebox{.56\textwidth}{!}{\includegraphics{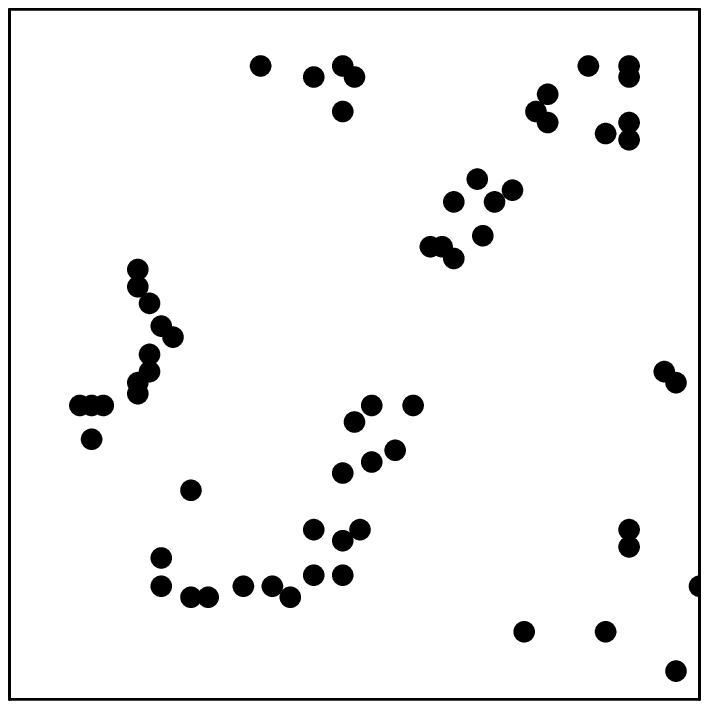}}\hspace{2em}\resizebox{.56\textwidth}{!}{\includegraphics{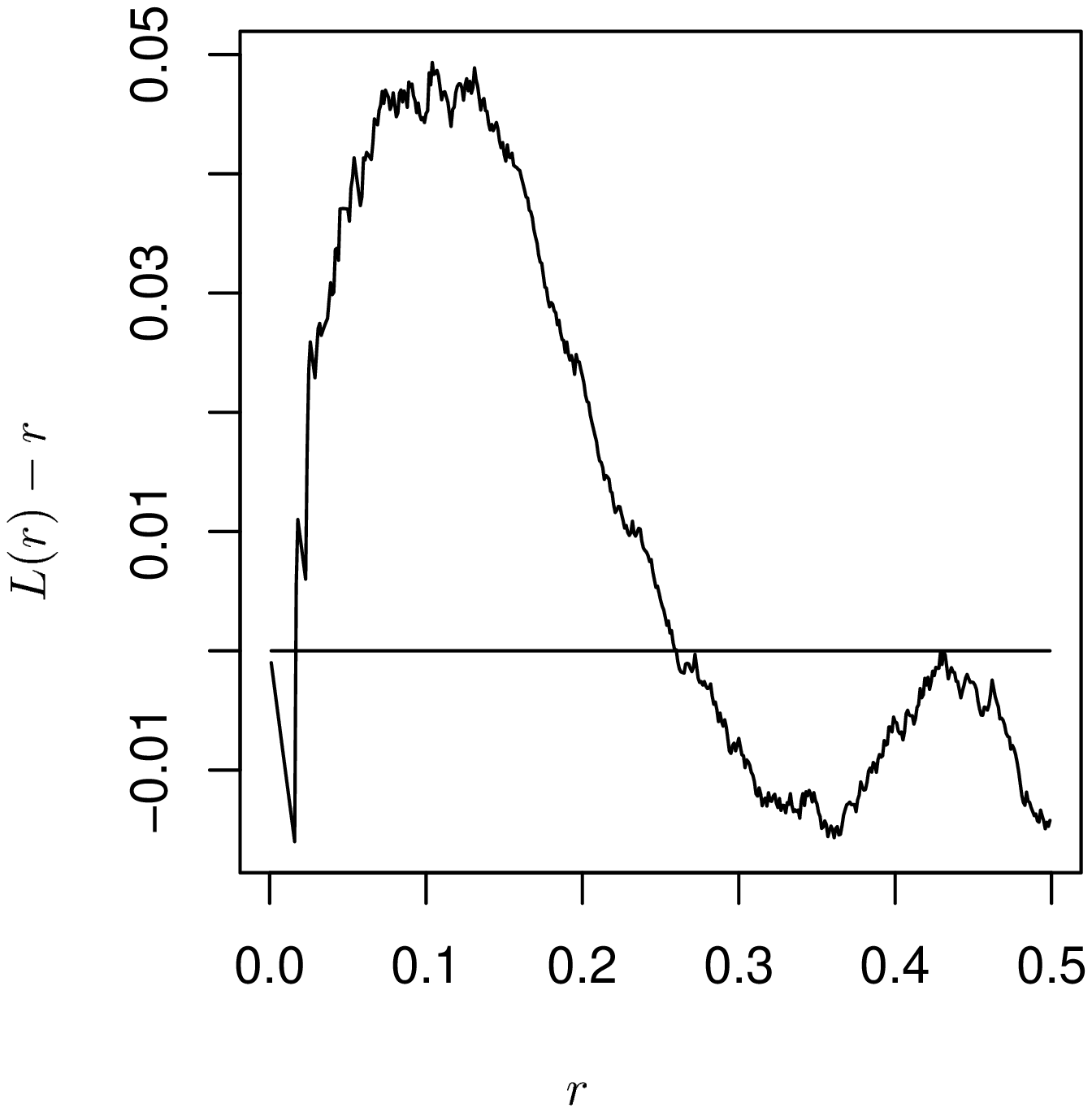}}}
 \end{center}
 \caption[Redwood seedlings data]{Redwood seedlings data.  Left: The data, selected by \cite{rip:mod-spa} from a larger data set analysed by \cite{str:a-mod}.  Right: Plot of the point-process L-function for the redwood seedlings.  There seems to be interaction at 3 different scales: (very) small scale repulsion followed by attraction at a moderate scale and then repulsion at larger scales.}
 \label{fig:red-points}
\end{figure}

We take a brief look at a data set which has been much analysed in the 
literature, the Redwood seedlings data first considered by 
\cite{str:a-mod}\index{Strauss, D. J.}.  We examine a subset of the original data chosen by 
\cite{rip:mod-spa}\index{Ripley, B. D.|(} and later analysed
by \cite{dig:par-est}\index{Diggle, P. J.} among 
others.  The data are plotted in Figure \ref{fig:red-points}.  We wish to 
model these data using the multiscale model we have introduced.  
The right pane of Figure~\ref{fig:red-points} gives the estimated point
process L-function\index{point process!L-function|(}\footnote{There is no
connection between the point process L-function and the use of the notation
$L$ elsewhere in this paper for the lower process in the CFTP
algorithm\index{coupling!dominated CFTP algorithm|pagenote}; the clash of notation is an unfortunate result of the standard use of $L$ in both contexts. Nor does either use of $L$ refer to a likelihood.} of the data, defined by $L(t)=\sqrt{\pi^{-1} K(t)}$ where $K$ is the K-function\index{point process!K-function|(} as defined by \citet{rip:sec-ord,rip:mod-spa}\index{Ripley, B. D.|)}.   

From this plot we estimate values of $R_1$ and $R_2$ as $0.07$ and 
$0.013$ respectively, giving repulsion at small scales and attraction at 
moderate scales.  It also seems that there is some repulsion at slightly 
larger scales, so it may be possible to use $R_2=0.2$ and to model the 
large-scale interaction rather than the small-scale interaction as we 
have chosen.

\begin{figure}
 \begin{center}
  \resizebox{.95\textwidth}{!}{\includegraphics{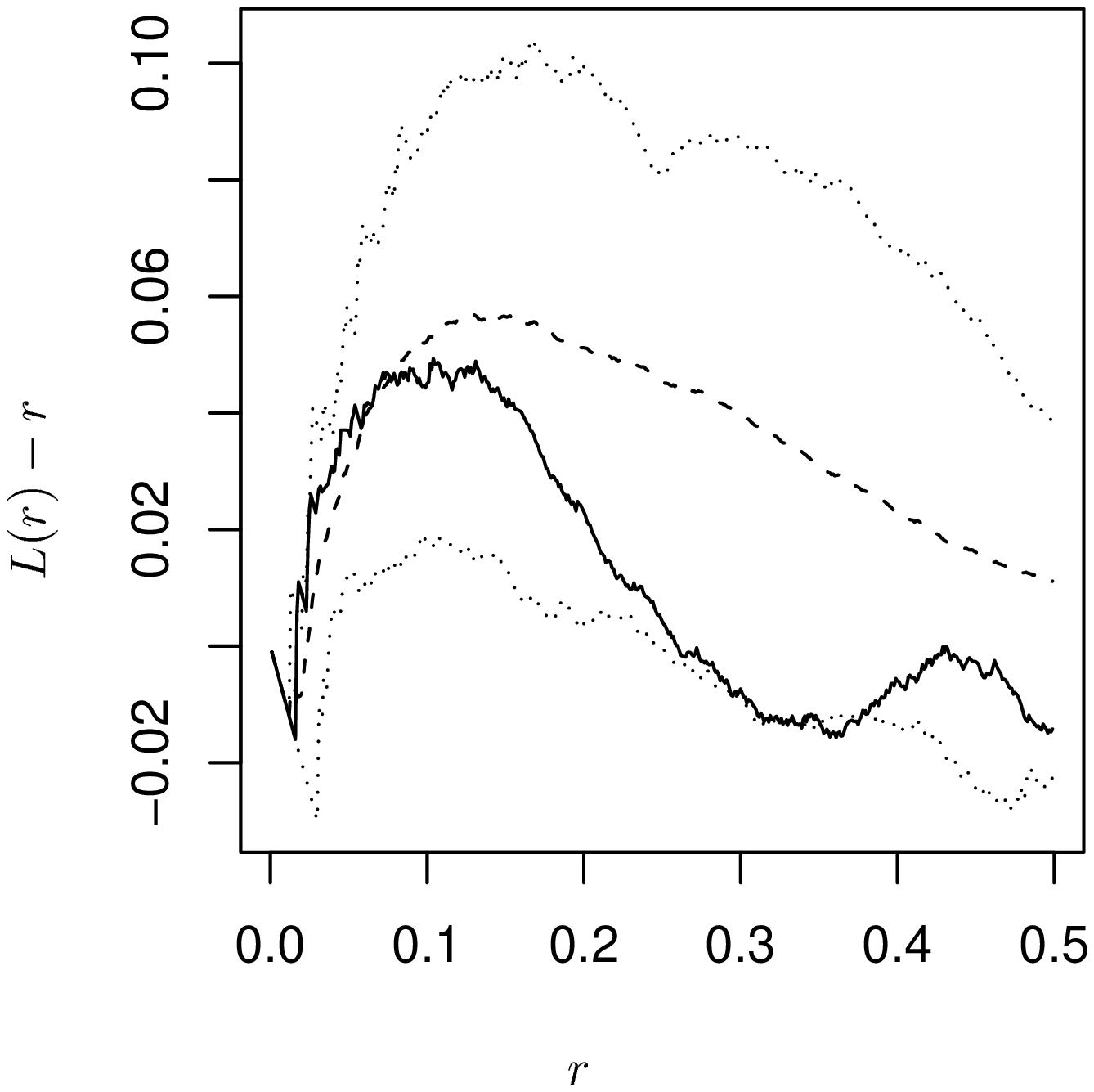}\hspace{2em}\includegraphics{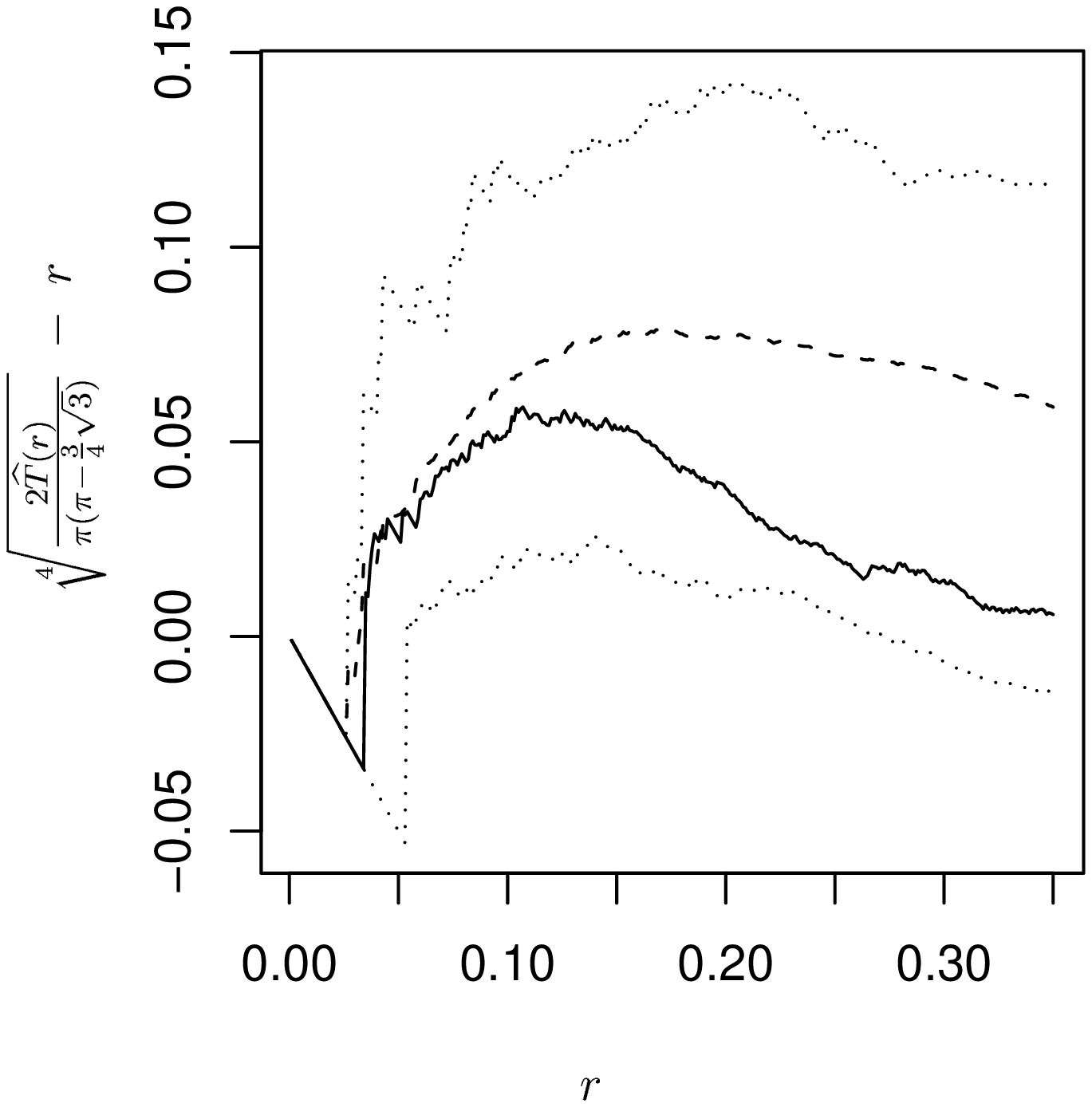}}
 \end{center}
 \caption[L- and T-function plots of the redwood seedlings data]{Point process L- and transformed T-function plots of the redwood seedlings data.  Left: L-function plots of the data together with simulations of the multiscale model with parameters $R_1=0.07$, $R_2=0.013$, $\lambda=0.118$, $\gamma_1=2000$ and $\gamma_2=10^{-200}$.  Dotted lines give an envelope of 19 simulations of the model, the solid line is the redwood seedlings data and the dashed line is the average of the 19 simulations.  Right: the corresponding plots for the transformed T-function.}
 \label{fig:simredk+t}
\end{figure}

Experimenting with various values for the remaining parameters, we chose values 
$\gamma_1=2000$ and $\gamma_2=10^{-200}$.  The value $\lambda=0.118$ was chosen to give about 62 points in each realisation, the number in the observed data set.
The remarkably 
small value of $\gamma_2$ was necessary because the value of $R_2$ was 
also very small.  It is clear from these numbers that it would be more 
natural to define $\gamma_1$ and $\gamma_2$ on a logarithmic scale.  
Figure \ref{fig:simredk+t} shows point process L- and T-function plots for 19 
simulations from this model, providing approximate 95\% Monte-Carlo 
confidence envelopes for the values of the functions.  It can be seen that 
on the basis of these functions, the model appears to fit the data 
reasonably well.  The T-function\index{point process!T-function|(}, defined
by \citet{sch-bad:thi-ord}\index{Schladitz, K.}\index{Baddeley, A. J.}, is a third order analogue of the
K-function\index{point process!K-function|)}, and for a Poisson
process\index{Poisson, S. D.!Poisson process} $T(r)$ is proportional to $r^4$; in Figure \ref{fig:simredk+t} the function is transformed by taking the fourth root of a suitable multiple and then subtracting $r$, in order to yield a function whose theoretical value for a Poisson process would be zero.

The plots show several things:  firstly that the model fits reasonably well, 
but that it is possible that we chose a value of $R_1$ which was slightly too 
large.  Perhaps $R_1=0.06$ would have been better.  Secondly, it seems 
that the large-scale repulsion may be an important factor which should not be 
ignored.  Thirdly, in this case we have gained little new information by 
plotting the T-function\index{point process!T-function|)}---the third-order behaviour of the data seems 
to be similar in nature to the second-order structure.

\subsection{Further comments}\label{discuss}

The main advantage of our method for the perfect simulation of locally stable point processes is that it allows acceptance 
probabilities to be computed in $O(n)$ instead of $O(2^{n})$
steps\index{computational complexity} for models 
which are neither purely attractive nor purely repulsive.  Because of the 
exponential dependence on $n$, the algorithm of \cite{ken-mol:per} is not 
feasible in these situations.

It is clear that in practice it is possible to extend the work to more general multiscale 
models.  For example, the sample $L$-function\index{point process!L-function|)} of the 
redwood seedlings might, if the sample size were larger, indicate the 
appropriateness of a three-scale model
\begin{equation}\label{threescale}
	p(X) = \alpha\lambda^{N(X)}\gamma_1^{-m(X\oplus G_1)}
					\gamma_2^{-m(X\oplus G_2)}
					\gamma_3^{-m(X\oplus G_3)}.
\end{equation}
The proof given in the Appendix of \cite{amb-sil:per} can easily be 
extended to show the existence of this process, and (\ref{threescale}) 
is also amenable to perfect simulation using the method of 
Section~\ref{perfect_alg}.  Because of the small size of the redwood 
seedlings data set a model of this complexity is not warranted, but 
the fitting of such models, and even higher order multiscale models in 
appropriate circumstances, would be an interesting topic for future 
research\index{data!Redwood seedlings|)}.

Another topic is the possibility of fitting parameters by a more systematic approach than the subjective adjustment approach we have used.    \citet{amb-sil:per} set out the possibility of using pseudo-likelihood\index{likelihood!pseudo-likelihood} 
\citep{bes:spa-int,bes:sta-ana,bes:met-sta,jen-mol:pse-exp} to 
estimate the parameters $\lambda$, $\gamma_1$ and $\gamma_2$ for given $R_1$ and $R_2$.   However, this method has yet to be implemented and investigated in practice\index{point process!area-interaction process|)}\index{point process!clustering|)}.


\section{Nonparametric regression by wavelet thresholding}
\label{sec:wavelets}
\subsection{Introductory remarks}

\index{regression!nonparametric regression|(}\index{wavelet!thresholding|(}We now turn to our next theme, nonparametric regression.   
Suppose we observe 
\begin{equation}\label{regr}
	y_i = g(t_i)+\varepsilon_i.
\end{equation}
where $g$ is an unknown function sampled with error at regularly
spaced intervals $t_i$.  The noise, $\varepsilon_i$ is assumed to be 
independent and Normally distributed with zero mean and variance $\sigma^2$.

The standard wavelet-based approach to this problem is based on two 
properties of the wavelet transform:
\begin{enumerate}[1.]
\item A large class of `well-behaved' functions can be sparsely represented 
in wavelet space;
\item The wavelet transform maps independent identically distributed
noise to independent identically distributed wavelet coefficients.
\end{enumerate}

These two properties combine to suggest that a good way to remove noise
from a signal is to transform the signal into wavelet space, discard all 
of the small coefficients (i.e. threshold), and perform the inverse 
transform.  Since the true (noiseless) signal had a sparse representation 
in wavelet space, the signal will essentially be concentrated in a small 
number of large coefficients.  The noise, on the other hand, will still be
spread evenly among the coefficients, so by discarding the small 
coefficients we must have discarded mostly noise and will thus have found a 
better estimate of the true signal.

The problem then arises of how to choose the threshold value.  General 
methods that have been applied in the wavelet context are
SureShrink\index{wavelet!SureShrink method@\emph{SureShrink} method} 
\citep{don-joh:ada-unk},
cross-validation\index{wavelet!cross-validation method}
\citep{nas:wav-shr} and 
false discovery rates\index{wavelet!false discovery rate method}
\citep{abr-ben:ada-thr}.  In the
BayesThresh\index{wavelet!BayesThresh method@\emph{BayesThresh} method}
approach \citep{abr-sap-sil:wav-thr} proposes a
Bayesian\index{Bayes, T.!Bayesian hierarchical model} hierarchical
model for the wavelet coefficients, using a mixture of a point mass at $0$ 
and a $N(0,\tau^2)$ density as their prior.  The marginal posterior
\Index{median} 
of the population wavelet coefficient is then used as the estimate.  This 
gives a thresholding rule, since the point mass at $0$ in the prior gives 
non-zero probability that the population wavelet coefficient will be zero.

Most Bayesian approaches to wavelet thresholding model the coefficients 
independently.  In order to capture the notion that nonzero wavelet coefficients may 
be in some way clustered, we allow prior dependency between the coefficients by
modelling them using an extension of the area-interaction process as defined in 
Section \ref{std_aip} above.
The basic idea is that if a coefficient is nonzero then 
it is more likely that its neighbours (in a suitable sense) are also non-zero.
We then use an appropriate CFTP\index{coupling!coupling from the past (CFTP)} approach to  
sample from the posterior distribution of our model.


\subsection{A Bayesian model for wavelet thresholding}\label{model}\label{ext}



\cite{abr-sap-sil:wav-thr} consider the problem where the true
wavelet coefficients are observed subject to Gaussian noise with zero mean and some
variance $\sigma^2$,
\[
	\widehat{d}_{jk}|d_{jk} \sim N(d_{jk},\sigma^2),
\]
where $\widehat{d}_{jk}$ is the value of the noisy wavelet coefficient
(the data) and $d_{jk}$ is the value of the true (noiseless) coefficient.

Their prior distribution on
the true wavelet coefficients is a mixture of a Normal distribution with
zero mean and variance dependent on the level of the
coefficient, and a point mass at zero as follows:
\begin{equation}\label{BayesThresh}
	d_{jk}\sim\pi_{j} N(0,\tau_j^2)+(1-\pi_{j})\delta(0),
\end{equation}
where $d_{jk}$ is the value of the $k$th coefficient at level $j$ of the
discrete wavelet transform, and the mixture weights $\{\pi_j\}$ are
constant within each level.  An alternative formulation of this can be
obtained by introducing auxiliary variables $Z=\{\zeta_{jk}\}$ with
$\zeta_{jk}\in\{0,1\}$ and independent hyperpriors\index{hyperprior}\undex{Bernoulli, J.}
\begin{equation}\label{indep}
	\zeta_{jk}\sim \text{Bernoulli}(\pi_j).
\end{equation}
The prior given in equation (\ref{BayesThresh}) is then expressed as
\begin{equation}\label{SpaBayesThresh}
	d_{jk}|Z\sim N(0,\zeta_{jk}\tau_j^2).
\end{equation}

The starting point for our extension of this approach is to note that $Z$ can be considered
to be a point process on the discrete space, or lattice, $\chi$ of indices $(j,k)$ of the wavelet coefficients.
The points of $Z$ give the locations at which the prior variance of the wavelet coefficient, conditional on $Z$,
is nonzero.
From this point of view, the hyperprior structure given in equation~(\ref{indep}) is 
equivalent to specifying $Z$ to be a Binomial process with rate function $p(j,k)=\pi_j$.  

Our general approach will be to replace $Z$ by a more general lattice process 
$\xi$ on $\chi$.
We allow $\xi$ to have multiple points at particular locations $(j,k)$, so that the number $\xi_{jk}$ 
of points at $(j,k)$ will be a non-negative integer, not necessarily confined to $\{ 0, 1 \}$. 
We will assume that the prior variance is 
proportional to the number of points of $\xi$ falling at the corresponding 
lattice location.   So if there are no points, the prior will be 
concentrated at zero and the corresponding observed wavelet will be 
treated as pure noise;  on the other hand, the larger the number of 
points, the larger the prior variance and the less shrinkage applied to the 
observed coefficient.
To allow for this generalisation, we extend (\ref{SpaBayesThresh}) in the natural way to
\begin{equation}\label{truecoef}
	d_{jk}| \xi \sim N(0,\tau^2 \xi_{jk}),
\end{equation}
where $\tau^2$ is a constant.  

We now consider the specification of the process $\xi$.  
While it is reasonable that the wavelet transform will produce a sparse
representation, the time-frequency localisation properties of the transform
also make it natural to expect that the representation will be
clustered\index{point process!clustering} in
some sense.  The existence of this clustered structure can be seen clearly in
Figure~\ref{fig:dwt}, which shows the discrete wavelet transform of several
common test functions represented in the natural binary tree configuration.
\begin{figure}
	\begin{center}
		\resizebox{0.95\textwidth}{!}{\includegraphics{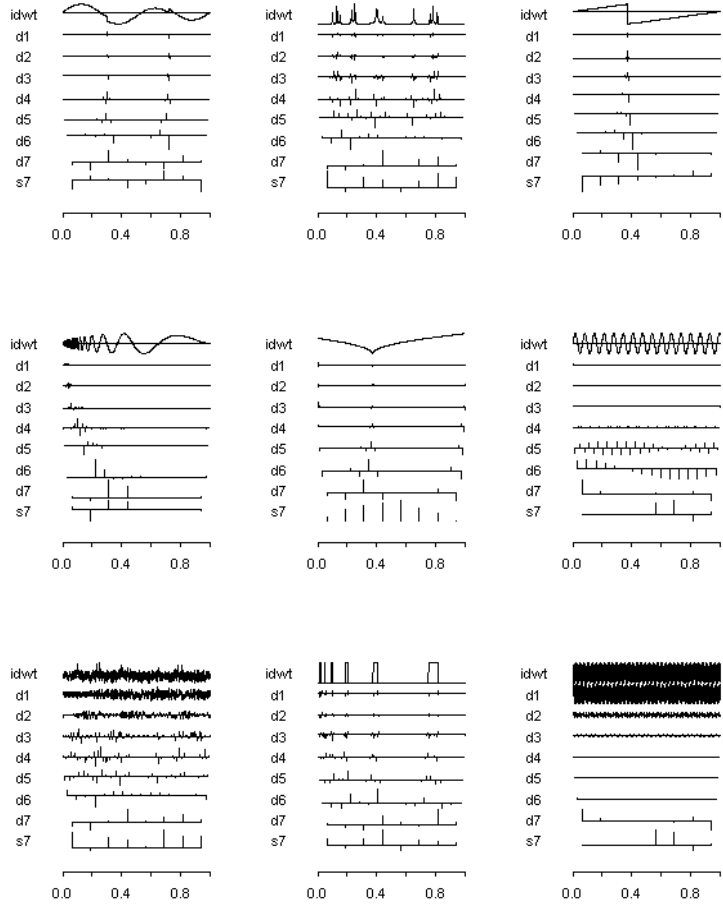}}
	\end{center}
\caption[Examples of the DWT of some test functions.]{Examples of the discrete wavelet transform of some test functions.  
There is clear evidence of clustering in most of the graphs.  The original functions are shown above their discrete wavelet transform each time.\label{fig:dwt}}
\end{figure}
With this clustering in mind, we model $\xi$ as an area-interaction
process\index{point process!area-interaction process} on the space $\chi$.  
The choice of the 
neighbourhoods $B(x)$ for $x$ in $\chi$ will be discussed below.   Given the choice of neighbourhoods, the process will 
be defined by 
\begin{equation}\label{aipprior}
	p(\xi) = \alpha\lambda^{N(\xi)}
				\gamma^{-m\{U(\xi)\}}
\end{equation}
where $p(\xi)$ is the intensity relative to 
the unit rate independent \index{Poisson, S. D.!auto-Poisson process}auto-Poisson process
\citep{cre:sta}.  If
we take $\gamma>1$ this gives a clustered configuration.  Thus we would expect
to see clusters\index{point process!clustering} of large values of $d_{jk}$ if this were a reasonable
model---which is exactly what we do see in Figure~\ref{fig:dwt}.

A simple application of Bayes's theorem\index{Bayes, T.!Bayess theorem@Bayes's theorem} tells us that the posterior for our
model is
\begin{eqnarray}\nonumber
	\lefteqn{p(\xi ,\mathbf{d}|\widehat{\mathbf{d}})  = 
		p(\xi)
		\prod_{j,k}p(d_{jk}|\xi_{jk})
		\prod_{j,k}p(\widehat{d}_{jk}|d_{jk},\xi_{jk})}\\ \nonumber
	& = &	\alpha\lambda^{N(\xi)}\gamma^{-m\{U(\xi)\}}
		\prod_{j,k}\frac{\exp(-d_{jk}^2/2\tau^2 \xi_{jk})}
			{\sqrt{2\pi\tau^2 \xi_{jk}}}
	\prod_{j,k} \frac{\exp\{-(\widehat{d}_{jk}-d_{jk})^2/2\sigma^2 \}}
			{\sqrt{2\pi\sigma^2}}  
			\\ & =& 
			\alpha\lambda^{N(\xi)}\gamma^{-m\{U(\xi)\}}
		\prod_{j,k}\frac{\exp\{ -d_{jk}^2/2\tau^2 \xi_{jk}-(\widehat{d}_{jk}-d_{jk})^2/2\sigma^2 \}}
			{\sqrt{2\pi\tau^2 \xi_{jk}} \sqrt{2\pi\sigma^2}}. 
			\label{post}
\end{eqnarray}

Clearly (\ref{post}) is not a standard density.  In Section \ref{alg} we 
show how the extension of the coupling-from-the-past algorithm described in 
Section~\ref{perfect_alg} enables us to sample from it.  

\subsection{Completing the specification}

We first note that in this context
$\chi$ is a discrete space, so the technical conditions required in Section \ref{std_aip}
of 
$m(\cdot)$ and $B(\cdot)$ are trivially satisfied.

In order to complete the specification of our area-interaction
prior\index{point process!area-interaction process} for $\xi$, 
we need a suitable interpretation of the neighbourhood of a location 
$x= (j,k)$ on the lattice $\chi$ of indices $(j,k)$ of wavelet coefficients.
This lattice is a binary tree, and there are 
many possibilities.  We decided to use the parent, the
coefficient on the parent's level of the transform which is next-nearest
to $x$, the two adjacent coefficients on the level of $x$, the two
children and the coefficients adjacent to them, making a total of nine
coefficients (including $x$ itself).  Figure \ref{fig:tree} illustrates
this scheme, which captures the localisation of both time and frequency
effects.  Figure \ref{fig:tree} also shows
how we dealt with boundaries: we assume that the signal we are
examining is periodic, making it natural to have periodic boundary
conditions in time.
\begin{figure}
	\begin{center}
		\resizebox{0.95\textwidth}{!}{\includegraphics{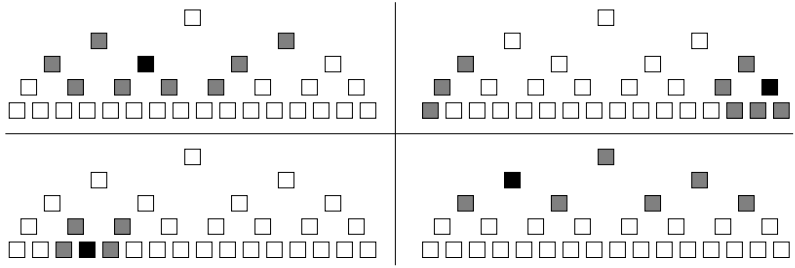}}
	\end{center}
	\caption[Examples of $B(\cdot)$.]{The four plots give examples of what we used as $B(\cdot)$ for four different example locations showing how we dealt with boundaries.  Grey boxes are $B(x)\setminus\{x\}$ for each example location $x$, while $x$ itself is shown as black.\label{fig:tree}}
\end{figure}
If $B(x)$ overlaps with a frequency boundary we simply discard those parts
which have no locations associated with them.  The simple counting measure
used has $m\{B(x)\}=9$ unless $x$ is in the bottom row or one of the top two
rows.

Other possible neighbourhood functions include using only the parent, children
and immediate sibling and cousin of a coefficient as $B(x)$, or a variation on
this taking into account the length of support of the wavelet used.  Though
we have chosen to use periodic boundary conditions, our method is equally
applicable without this assumption, with appropriate modification of $B(x)$.

\section{Perfect simulation for wavelet curve estimation} \label{sec:wavperfsim}
\subsection{Exact posterior sampling for lattice
processes}\label{alg}\index{wavelet!curve estimation}

In this section, we develop a practical approach to simulation from a close approximation to the
posterior density (\ref{post}), making use of coupling from the
past\index{coupling!coupling from the past (CFTP)}.
One of the advantages of the Normal model we propose in Section~\ref{ext} 
is that it is possible to integrate out $d_{jk}$ and work only with the lattice process
$\xi$.  Performing this calculation, we see that equation~(\ref{post}) 
can be rewritten as
\begin{equation*}
	p(\xi|\widehat{\mathbf{d}}) 
	=	p(\xi)\prod_{j,k}\frac{\exp
			\left\{-\widehat{d}_{jk}^2/2(\sigma^2+\tau^2\xi_{jk})\right\}}
		{\sqrt{2\pi(\sigma^2+\tau^2\xi_{jk})}},
\end{equation*}
by the standard convolution properties of normal densities.
We now see that it is possible to sample from the posterior by simulating only 
the process $\xi$ and ignoring the marks $\mathbf{d}$.  This lattice 
process is amenable to perfect simulation using the method of Section \ref{perfectmulti} above.
Let
\begin{align*}
	f_1(\xi)= &\; \lambda^{N(\xi)},\\
	f_2(\xi)= &\; \gamma^{-m\left\{U(\xi)\right\}},\\
	f_3(\xi)= &\; \prod_{j,k}
		\exp \{-\widehat{d}_{jk}^2\left/\right.2(\sigma^2+\tau^2\xi_{jk})\}
		\text{ and}\\
	f_4(\xi)= &\; \prod_{j,k}\left\{2\pi(\sigma^2+\tau^2\xi_{jk})\right\}^{-1/2}.
\end{align*}
Then
\begin{align*}
	\lambda_{f_1}(u;\xi)= &\; \lambda,\\
	\lambda_{f_2}(u;\xi)= &\; \gamma^{-m\left\{B(u)\setminus U(\xi)\right\}}\leq1,\\
	\lambda_{f_3}(u;\xi)= &\; 
		\exp\left\{\frac{\widehat{d}_u^2\tau^2}
			{2(\sigma^2+\tau^2\xi_{u})\{\sigma^2+\tau^2(\xi_{u}+1)\}}\right\} \\
			& \: \leq\exp\left\{\frac{\widehat{d}_u^2\tau^2}
			{2\sigma^2(\tau^2+\sigma^2)}\right\}
\text{ and}\\
	\lambda_{f_4}(u;\xi)= &\;
		\left\{\frac{\tau^2\xi_u+\sigma^2}{\tau^2(\xi_u+1)+\sigma^2}\right\}^{1/2}
			\leq1.
\end{align*}
By a slight abuse of notation, in the second and third equations above we use 
$u$ to refer both to the point $\{u\}$ and the location $(j,k)$ at which it 
is found.
The functions $f_1$, \ldots, $f_4$ are also monotone with respect to the subset relation, 
so all of the conditions for exact simulation using the method of
Section~\ref{perfect_alg} are satisfied.

In the spatial processes considered in detail in Section \ref{perfectmulti}, the 
dominating process had constant intensity across the space $\chi$. 
In the present context, however, it is necessary in practice to  use a dominating process which 
has a different rate at each lattice location, and then use 
location-specific maxima
and minima rather than global maxima and minima.   Because we can now 
use location-specific, rather than global,
maxima and minima, we can initialise upper and lower processes that are much
closer together than would have been possible with a constant-rate 
dominating process.  This has the consequence of reducing \Index{coalescence} times
to feasible levels.
A constant-rate dominating process would not have been 
feasible due to the size of
the global maxima, so this modification to the method of Section \ref{perfectmulti}
is essential; see Section~\ref{large+small1} for details.  
Chapter 5 of \cite{amb:dom-cou} gives some other 
examples of dominating processes with location-specific intensities.


The location-specific rate of the dominating process $D$ is 
\begin{equation}\label{lambda_dom}
	\lambda_{jk}^{dom} = \lambda 
		       e^{\widehat{d}_{jk}^2\tau^2/2\sigma^2(\tau^2+\sigma^2)}
\end{equation}
for each location $(j,k)$ on the lattice.  The lower process is then started 
as a thinned version of $D$.  Points are accepted with probability
\[
	P(x) = \gamma^{-M(\chi)}\left(\frac{\sigma^2}{\tau^2+\sigma^2}\right)^{1/2}
		\times\exp\left\{ - 
			\frac{\widehat{d}_x^2\tau^2}
				{2\sigma^2(\tau^2+\sigma^2)}\right\},
\]
where $M(\chi) = \max_\chi[m\{B(x)\}]$.
The upper and lower processes are then evolved through time, 
accepting points as described in Section~\ref{perfect_alg} with probability
\[
\frac1{\lambda_{jk}^{dom}}\lambda_{f_1}(u;\xi^{\text{up}})
		\lambda_{f_2}(u;\xi^{\text{up}})
		\lambda_{f_3}(u;\xi^{\text{low}})
		\lambda_{f_4}(u;\xi^{\text{up}})
\]
for the upper process and
\[
\frac1{\lambda_{jk}^{dom}}\lambda_{f_1}(u;\xi^{\text{low}})
		\lambda_{f_2}(u;\xi^{\text{low}})
		\lambda_{f_3}(u;\xi^{\text{up}})
		\lambda_{f_4}(u;\xi^{\text{low}})
\]
for the lower process.  The remainder of the algorithm carries
over in the obvious way.  
There are still some issues to be addressed due to very high birth rates 
in the dominating process, and this will be done in Section \ref{large+small1}.

\subsection{Using the generated samples}\label{using}

Although $\mathbf{d}$ was integrated out for simulation reasons in
Section~\ref{ext} it is, naturally, the quantity of interest.  Having
simulated realisations of $\xi|\widehat{\mathbf{d}}$ we then
generate $\mathbf{d}|\xi,\widehat{\mathbf{d}}$ for each
realisation $\xi$ generated in the first step.  The sample
\Index{median} of $\mathbf{d}|\xi,\widehat{\mathbf{d}}$ gives an estimate for
$\mathbf{d}$.  The median is used instead of the mean as this
gives a thresholding rule, defined by \cite{abr-sap-sil:wav-thr} as a rule giving $p(d_{jk}=0|\widehat{\mathbf{d}})>0$.

We calculate $p(\mathbf{d}|\xi,\widehat{\mathbf{d}})$ using logarithms
for ease of notation.  Assuming that $\xi_{jk}\ne 0$ we find
\begin{eqnarray*}
	\log p(d_{jk}|\widehat{d}_{jk},\xi_{jk}\ne 0) & = &
		\log{p(d_{jk}|\xi_{jk}\ne0)} +
		\log{p(\widehat{d}_{jk}|d_{jk},\xi_{jk}\ne0)} + C\\
	& = & \frac{-d_{jk}^2}{2\tau^2\xi_{jk}}+
			\frac{-(\widehat{d}_{jk}-d_{jk})^2}{2\sigma^2} + C_1\\
	& = & -\frac{(\sigma^2+\tau^2\xi_{jk})
		\left(d_{jk}-\frac{\tau^2\xi_{jk} \widehat{d}_{jk}}
			{\sigma^2+\tau^2\xi_{jk}}\right)^2}
		{2\sigma^2\tau^2\xi_{jk}}+C_2
\end{eqnarray*}
where $C$, $C_1$ and $C_2$ are constants.  Thus
\[
	d_{jk}|\widehat{d}_{jk},\xi_{jk}\ne 0 \sim
		N\left(\frac{\tau^2\xi_{jk} \widehat{d}_{jk}}
			{\sigma^2+\tau^2\xi_{jk}},
		\frac{\sigma^2\tau^2\xi_{jk}}{\sigma^2+\tau^2\xi_{jk}}\right).
\]
When $\xi_{jk}=0$ we clearly have $p(d_{jk}|\xi_{jk},\widehat{d}_{jk})=0$.


\subsection{Dealing with large and small rates}\label{large+small1}

We now deal with some approximations which are necessary to allow our
algorithm to be feasible computationally\index{computational complexity}.
Recall from equation (\ref{lambda_dom})
that if the maximum data value $d_{jk}$ is twenty times larger in magnitude
than the standard deviation of the noise (a not uncommon event for
reasonable noise levels) then we have
\begin{eqnarray*}
	\lambda_{dom} & = & \lambda e^{400\sigma^2\tau^2/
						2\sigma^2(\tau^2+\sigma^2)}\\
		& = & \lambda e^{200\tau^2/(\tau^2+\sigma^2)}.
\end{eqnarray*}
Now unless $\tau$ is significantly smaller than $\sigma$, this will result in
enormous birth rates, which make it necessary to modify the algorithm appropriately.  
To address this issue, we noted that 
the chances of there being no
live points at a location whose data value is large (resulting in a value
of $\lambda_{dom}$ larger than $e^4$) is sufficiently small that for the
purposes of calculating $\lambda_{f_2}(u;\xi)$ for
nearby locations it can be assumed that the number of points alive was
strictly positive.  

This means that we do not know the true value
of $\xi_{jk}$ for the locations with the largest values of $d_{jk}$.  This
leads to problems since we need to generate $d_{jk}$ from the distribution
\[
	d_{jk}|\xi_{jk},\widehat{d}_{jk}
		\sim
			N\left(\frac{\tau^2\xi_{jk}\widehat{d}_{jk}}
				{\sigma^2+\tau^2\xi_{jk}},
			\frac{\sigma^2\tau^2\xi_{jk}}
				{\sigma^2+\tau^2\xi_{jk}}\right),
\]
which requires values of $\xi_{jk}$ for each location $(j,k)$ in the
configuration.
To deal with this issue, we first note that, as $\xi_{jk}\to\infty$,
\[
	\frac{\tau^2\xi_{jk}\widehat{d}_{jk}}{\sigma^2+\tau^2\xi_{jk}}
		\hspace{.5em} \longrightarrow \hspace{.5em}
			\widehat{d}_{jk} 
\]
monotonically from below, and 
\[
	\frac{\tau^2\xi_{jk}\sigma^2}{\sigma^2+\tau^2\xi_{jk}}
		\hspace{.5em} \longrightarrow \hspace{.5em}
			\sigma^2,
\]
also monotonically from below.  Since $\sigma$ is typically small,
convergence is very fast indeed.  Taking $\tau=\sigma$ as an example we see
that even when $\xi_{jk}=5$ we have
\[
	\frac{\tau^2\xi_{jk}\widehat{d}_{jk}}{\sigma^2+\tau^2\xi_{jk}}
		=
			\frac56\widehat{d}_{jk}
\]
and
\[
	\frac{\tau^2\xi_{jk}\sigma^2}{\sigma^2+\tau^2\xi_{jk}}
		=
			\frac56\sigma^2.
\]
We see that we are already within $\frac16$
of the limit.  Convergence is even faster for larger values of $\tau$.

We also recall that the dominating process
gives an upper bound for the value of $\xi_{jk}$ at every location.  Thus a
good estimate for $d_{jk}$ would be gained by taking the value of $\xi_{jk}$
in the dominating process for those points where we do not know the exact
value.  This is a good solution but is unnecessary in some cases, as
sometimes the value of $\lambda_{dom}$ is so large that there is little
advantage in using this value.
Thus for exceptionally large values of $\lambda_{dom}$ we simply use
$N(\widehat{d}_{jk},\sigma^2)$ numbers as our estimate of $d_{jk}$.

\subsection{Simulation study}\label{simstud}\index{wavelet!curve estimation|(}

We now present a simulation study of the performance of our
estimator relative to several established wavelet-based estimators.  Similar
to the study of \cite{abr-sap-sil:wav-thr}\index{Abramovitch,
F.}\index{Benjamini, Y.}, we investigate the 
performance of our method on the four standard test
functions\index{data!standard test functions} of 
\cite{don-joh:ide-spa,don-joh:ada-unk}\index{Donoho, D. L.}\index{Johnstone, I. M.}, 
namely `Blocks', `Bumps', `Doppler' and `Heavisine'.  These test 
functions are used because they exhibit different kinds of behaviour typical of 
signals arising in a variety of applications.  

The test functions were simulated at 256 points equally spaced on the unit
interval.  The test signals were centred and scaled so as to have mean value
$0$ and standard deviation $1$.  We then added independent $N(0,\sigma^2)$
noise to each of the functions, where $\sigma$ was taken as $1/10$, $1/7$
and $1/3$.  The noise levels then correspond to root signal-to-noise ratios
(RSNR)\index{signal-to-noise ratio!root signal-to-noise ratio (RSNR)} of $10$, $7$ and $3$ respectively.  We performed 25 replications.  For
our method, we simulated 25 independent draws from the posterior distribution
of the $d_{jk}$ and used the sample \Index{median} as our estimate, as this gives a
thresholding rule.  For each of the runs, $\sigma$ was set to the standard
deviation of the noise we added, $\tau$ was set to $1.0$, $\lambda$ was set
to $0.05$ and $\gamma$ was set to $3.0$.

The values of parameters $\sigma$ and $\tau$ were
set to the true values of the standard deviation of the noise and the
signal, respectively.  In practice it will be necessary to develop some
method for estimating these values.  The value of $\lambda$ was chosen to
be $0.05$ because it was felt that not many of the coefficients would be
significant.  The value of $\gamma$ was chosen based on small trials for the
heavisine and jumpsine datasets.

We compare our method with several established wavelet-based estimators
for reconstructing noisy signals:
SureShrink\index{wavelet!SureShrink method@\emph{SureShrink} method}
\citep{don-joh:ide-spa},
two-fold
cross-validation\index{wavelet!cross-validation method} as applied by
\cite{nas:wav-shr}\index{Nason, G. P.}, ordinary
BayesThresh\index{wavelet!BayesThresh method@\emph{BayesThresh} method}
\citep{abr-sap-sil:wav-thr}, and the false
discovery rate\index{wavelet!false discovery rate method} as applied by
\cite{abr-ben:ada-thr}.

For test signals `Bumps', `Doppler' and
`Heavisine' we used Daube\-chies'\index{Daubechies, I.} least asymmetric wavelet of order 10
\citep{dau:ten}.  For the `Blocks' signal we used the Haar
wavelet\index{Haar, A.!Haar wavelet}, as the
original signal was piecewise constant.  The analysis was carried out
using the freely available $R$ statistical package\index{R statistical
package@$R$ statistical package}.  The
WaveThresh\index{wavelet!WaveThresh package@\emph{WaveThresh} package} package
\citep{nas:wav-thr} was used to perform the discrete wavelet transform and
also to compute the SureShrink, cross-validation, BayesThresh and false
discovery rate estimators.

\begin{table}
\begin{minipage}{96mm}
\caption[Comparison of our estimator with other wavelet-based estimators]{\label{tab:sim}Average mean-square errors ($\times 10^4$) for the area-interaction BayesThresh (AIBT), SureShrink (SS), 
cross-validation (CV), ordinary BayesThresh (BT) and false discovery rate (FDR) estimators for four test functions for three values of the root signal-to-noise ratio.  
Averages are based on 25 replicates.  
Standard errors are given in parentheses.}
\end{minipage}
 \begin{tabular}{ccr@{ }rr@{ }rr@{ }rr@{ }r@{\hspace{.7em}}c}
 \hline
  RSNR	& Method	& \multicolumn{9}{c}{Test functions}\\ \cline{3-11}
	&	& \multicolumn{2}{c}{Blocks}   & \multicolumn{2}{c}{Bumps}    & \multicolumn{2}{c}{Doppler} & \multicolumn{3}{c}{Heavisine}\\ \hline
 	& AIBT	& 25 & (1)    & 84 & (2)    & 49 & (1)   & \hspace{1em}32 & (1)&\\
  	& SS	& 49 & (2)    & 131 & (6)   & 54 & (2)   & 66 & (2)\\
  10	& CV	& 55 & (2)    & 392 & (21)  & 112 & (5)  & 31 & (1)\\
  	& BT	& 344 & (10)  & 1651 & (17) & 167 & (5)  & 35 & (2)\\
  	& FDR	& 159 & (14)  & 449 & (17)  & 145 & (5)  & 64 & (3)\\[1ex]
  	& AIBT	& 56 & (3)    & 185 & (5)   & 87 & (3)   & 52 & (2)\\
  	& SS	& 98 & (3)    & 253 & (10)  & 99 & (4)   & 94 & (4)\\
  7	& CV	& 96 & (3)    & 441 & (25)  & 135 & (6)  & 54 & (3)\\
  	& BT	& 414 & (11)  & 1716 & (21) & 225 & (6)  & 57 & (2)\\
  	& FDR	& 294 & (18)  & 758 & (27)  & 253 & (9)  & 93 & (4)\\[1ex]
	& AIBT	& 535 & (21)  & 1023 & (15) & 448 & (18) & 153 & (6)\\
	& SS	& 482 & (13)  & 973 & (45)  & 399 & (14) & 147 & (3)\\
  3	& CV	& 452 & (11)  & 914 & (34)  & 375 & (13) & 148 & (6)\\
	& BT	& 860 & (24)  & 2015 & (37) & 448 & (12) & 140 & (4)\\
	& FDR	& 1230 & (52) & 2324 & (88) & 862 & (31) & 148 & (3)\\ \hline
 \end{tabular}
\end{table}

The goodness of fit of each estimator was measured by its average
mean-square error (AMSE)\index{error!average mean-square error (AMSE)} over the 25 replications.  Table \ref{tab:sim}
presents the results.  It is clear that our estimator performs extremely
well with respect to the other estimators when the \Index{signal-to-noise ratio} is
moderate or large, but less well, though still competitively, when there is a
small signal-to-noise ratio\index{wavelet!curve estimation|)}.  

\subsection{Remarks and directions for future work}\label{concl}

Our procedure for Bayesian wavelet thresholding has used
the naturally clustered nature of the wavelet transform when deciding
how much weight to give coefficient values.  
In comparisons with other methods, our approach performed very
well for moderate and low noise levels, and reasonably competitively for
higher noise levels.

One possible area for future work would be to replace equation
\eqref{truecoef} with
\[
	d_{jk}|\xi \sim N(0,\tau^2(\xi_{jk})^z),
\]
where $z$ would be a further parameter.  This would modify the number of
points which are likely to be alive at any given location and thus also
modify the tail behaviour of the prior. 
The idea behind this suggestion is that when we know that the behaviour of
the data is either heavy\index{tail behaviour!heavy-tailed} or light tailed\index{tail behaviour!light-tailed}, we could adjust $z$ to compensate.
This could possibly also help speed up convergence by reducing the number
of points at locations with large values of $d_{jk}$.  

A second possible area for future work would be to develop some automatic
methods for choosing the parameter values, perhaps using the method of
maximum pseudo-likelihood\index{likelihood!pseudo-likelihood} \citep{bes:spa-int,bes:sta-ana,bes:met-sta}. 

Finally, it would be of obvious interest to find an approach which made the
approximations of Section~\ref{large+small1} unnecessary and allowed for true
CFTP\index{coupling!coupling from the past (CFTP)} to be preserved\index{simulation!exact/perfect simulation|)}\index{regression!nonparametric regression|)}\index{wavelet!thresholding|)}.

\section{Conclusion}

This paper, based on \cite{amb-sil:wave, amb-sil:per}, has drawn 
together a number of themes which demonstrate the way that modern
\index{statistics!computational statistics}computational statistics has made use of work in applied \Index{probability} and stochastic processes
in ways which would have been inconceivable not many decades ago.   
It is therefore a particular pleasure to dedicate it to John
Kingman\index{Kingman, J. F. C.} on his birthday!

\bibliography{festpaper}
\bibliographystyle{cambridgeauthordateCMG}

\end{document}